\documentclass[11pt]{article}
\usepackage[a4paper,margin=1in]{geometry}
\usepackage{amsmath,amssymb,amsthm,mathtools}
\usepackage{mathrsfs}
\usepackage{bm}
\usepackage{hyperref}
\hypersetup{hidelinks}
\usepackage{cleveref}
\usepackage{microtype}
\usepackage{booktabs}
\usepackage{tikz}
\usetikzlibrary{arrows.meta,positioning}
\usepackage{pgfplots}
\pgfplotsset{compat=1.18}

\DeclareMathOperator{\MI}{I}
\newcommand{\E}{\mathbb{E}}
\newcommand{\R}{\mathbb{R}}
\newcommand{\T}{\mathsf{T}}
\newcommand{\Xb}{\mathsf{X}}
\newcommand{\Cb}{\mathsf{C}}

\newcommand{\Z}{\mathsf{Z}}
\newcommand{\Ph}{\varphi}

\newtheorem{theorem}{Theorem}

\newtheorem{lemma}[theorem]{Lemma}
\newtheorem{corollary}[theorem]{Corollary}
\theoremstyle{definition}
\newtheorem{definition}[theorem]{Definition}
\newtheorem{remark}[theorem]{Remark}

\title{Source-Side Sufficiency for the\\
Information Bottleneck: Exact Reduction\\
and Finite-Block Equivalence}
\author{Joss Armstrong\\
Ericsson Ireland, Athlone, Ireland\\
\texttt{joss.armstrong@ericsson.com}}
\date{August 2026}

\begin{document}
\maketitle

\begin{abstract}
The input side of the Information Bottleneck may carry variation that
is irrelevant to the task yet still costs rate. We identify this cost
exactly. Let $\T$ be a source, $\Cb$ a relevance variable, and
$Z=\Ph(\T)$ a deterministic statistic satisfying
$\Cb\leftrightarrow Z\leftrightarrow\T$. Encoding $Z$ in place of
$\T$ can clearly do no better. We prove it does no worse, and we
account for the difference. Averaging any encoder $p(\Xb\mid\T)$ over
the fibres of $\Ph$ preserves $\MI(\Xb;\Cb)$ and lowers the rate by
exactly $\MI(\Xb;\T\mid Z)$, and pullback reverses the map. The IB
curves and Lagrangian infima therefore coincide on standard Borel
spaces at every tradeoff parameter. The attained optima also
correspond. Every full-source minimiser factors through $Z$, and
every reduced minimiser pulls back. When $\Cb$ is finite and
distortion is logarithmic loss, the equivalence is operational.
Replacing $\T^n$ by $Z^n$ preserves the optimal remote distortion at
every blocklength and message budget, not only in the single-letter
limit. Consequences include the closed-form Gaussian IB solution as a
direct corollary, an exact account of the deterministic-relevance
pathologies, and the replacement of IB optimisation over a rich
source by optimisation over the typically much smaller statistic.
\end{abstract}

\section{Introduction}
\label{sec:intro}

The IB principle of Tishby, Pereira and
Bialek~\cite{tishby1999information} studies representations that
retain information about a target variable $\Cb$ and compress a
source variable $\T$. In its Lagrangian form, the problem is to find
an encoder $p(\Xb\mid\T)$ that minimises
\begin{equation}
\mathcal{L}_\beta(\Xb) \;=\; \MI(\Xb;\T) - \beta\, \MI(\Xb;\Cb),
\qquad \beta \ge 0.
\label{eq:ib-lagrangian}
\end{equation}
Equivalently, the problem is to maximise $\MI(\Xb;\Cb)$ subject to a
constraint on $\MI(\Xb;\T)$. The resulting IB curve is a basic description of the
relevance--compression tradeoff for the pair $(\T,\Cb)$.

In many applications, however, the source is not presented directly to
the information-limited encoder. A system first maps a rich source $\T$
to a task-specific summary $Z=\Ph(\T)$, then communicates or stores a
representation of $Z$. This raises a structural question. When does the
summary preserve the entire relevance--compression tradeoff rather than
merely one predictor or one operating point? The answer must be
task-relative. A summary adequate for one prediction target need not
retain what is required for a different label or a detection decision.

The condition used here is source-side sufficiency. The conditional
$p(\Cb\mid\T)$ must depend on $\T$ only through $Z=\Ph(\T)$, so that
$\Cb \leftrightarrow \Ph(\T) \leftrightarrow \T$ forms a Markov
chain. For any encoder $p(\Xb\mid\T)$, its fibre average defines an
encoder $q(\Xb\mid Z)$ that preserves $\MI(\Xb;\Cb)$ and satisfies
\begin{equation*}
\MI_p(\Xb;\T)-\MI_q(\Xb;Z)=\MI_p(\Xb;\T\mid Z).
\end{equation*}
The average therefore removes precisely the rate assigned to
differences among source points within a fibre of $\Ph$. It does not
change relevance.

This identity yields equality of the IB curves and Lagrangian infima
at every $\beta$ and characterises attained optima by pullback. The
reduction holds on arbitrary standard Borel spaces, without
finite-alphabet hypotheses. For finite $\Cb$ under logarithmic loss,
it also holds at finite blocklength. The best remote distortion is the
same at every blocklength and message budget. The operational
rate--distortion functions therefore coincide. We refer to these
results collectively as the \emph{IB reduction theorem} and its
finite-block operational form.

The closest prior statement is due to Harremo\"{e}s and
Tishby~\cite{harremoes2007ib}. In 2007 they observed, in the course of an
argument about distortion measures, that an input-side sufficient
statistic should preserve the finite-alphabet IB rate--distortion curve.
Their projection argument is brief and was not developed into a general
theorem. The present paper supplies the rate-removal identity on standard
Borel spaces, the variational and optimiser consequences, and the
finite-block operational equivalence.

When $Z=\Ph(\T)$ is both a deterministic function of $\T$ and
sufficient for $\Cb$, the signals $\T$ and $Z$ are
Blackwell-equivalent experiments about $\Cb$
\cite{blackwell1953equivalent}. Blackwell equivalence alone does not
compare the rate coordinates $\MI(\Xb;\T)$ and $\MI(\Xb;Z)$. The
fibre-average identity is the rate-sensitive step. Kolchinsky's recent
IB-type use of Blackwell order concerns a distinct multi-source
partial-information problem~\cite{kolchinsky2024redundancy}.

The IB problem admits familiar solution methods in two regimes. When
both $\T$ and $\Cb$ are discrete and finitely supported, the
Blahut--Arimoto fixed-point iteration is available
\cite{blahut1972computation,arimoto1972algorithm}. When the pair is
jointly Gaussian, Chechik, Globerson, Tishby and
Weiss~\cite{chechik2005gaussian} give a closed-form solution in terms
of canonical correlation structure. Between these regimes lies the
generic high-dimensional case, where estimation or optimisation of the
full joint $p(\T,\Cb)$ is difficult. A prominent practical response
is the variational IB (VIB) framework of Alemi et
al.~\cite{alemi2017deep}, which replaces the exact objective by
amortised variational bounds at the cost of surrogate objectives and
a known posterior-collapse pathology~\cite{alemi2018fixing}.

Sufficiency in this sense is a joint property of the summary and the
named task variable, and it holds exactly in a range of canonical
settings. The posterior statistic is always sufficient. Structured
conditionals that depend on a finite-dimensional parameter,
deterministic tasks that factor through the summary, and sensor sources
whose nuisance component is irrelevant given the retained signal supply
further exact cases (\Cref{sec:examples}). Exact sufficiency must be
checked separately for each task. If only approximate sufficiency is
known, the exact equalities proved here do not follow.

The linear-Gaussian and deterministic-target cases follow by direct
specialisation. A finite binary example displays the rate removed by
the fibre average exactly.

The remainder of the paper establishes the reduction formally
(\Cref{sec:setup,sec:main}), proves it (\Cref{sec:proof}), derives
the Gaussian specialisation (\Cref{sec:gaussian}), recasts the
reduction under log-loss remote source coding
(\Cref{sec:operational}), certifies the sufficiency condition in
canonical settings (\Cref{sec:examples}), gives an exact finite example
(\Cref{sec:numerics}), and discusses implications
(\Cref{sec:discussion}).

\section{Setup and notation}
\label{sec:setup}

Let $(\T,\Cb)$ be random variables on a joint probability space. Let
$\T$ take values in a standard Borel space
$(\mathcal{T},\mathcal{B}_\T)$ and let $\Cb$ take values in a standard
Borel space $(\mathcal{C},\mathcal{B}_\Cb)$. See
\cite{kechris1995descriptive} for background on standard Borel
spaces. We assume
$\MI(\T;\Cb) < \infty$. Let $\Xb$ denote a further random variable, the
IB representation, which takes values in a standard Borel space
$(\mathcal{X},\mathcal{B}_\Xb)$. It is coupled to $\T$ through a conditional
distribution $p(\Xb \mid \T)$ and satisfies the Markov chain

\begin{equation}
\Cb \;\longleftrightarrow\; \T \;\longleftrightarrow\; \Xb.
\label{eq:markov}
\end{equation}

\begin{definition}[IB curve]
\label{def:ib-curve}
The \emph{IB curve} of $(\T,\Cb)$ is the function
\begin{equation}
R \;\longmapsto\; \mathcal{I}_{\T,\Cb}(R) \;=\;
\sup\bigl\{\, \MI(\Xb;\Cb) : \, p(\Xb \mid \T) \text{ satisfies
\eqref{eq:markov} and } \MI(\Xb;\T) \le R \,\bigr\},
\qquad R \ge 0.
\end{equation}
\end{definition}

\begin{definition}[Sufficient statistic for $\Cb$ given $\T$]
\label{def:sufficient-statistic}
A measurable map $\Ph : (\mathcal{T},\mathcal{B}_\T) \to
(\mathcal{Z},\mathcal{B}_\Ph)$ is \emph{sufficient for $\Cb$ given
$\T$} if the conditional distribution $p(\Cb \mid \T)$ is a
measurable function of $\Ph(\T)$, equivalently if
\begin{equation}
\Cb \;\longleftrightarrow\; \Ph(\T) \;\longleftrightarrow\; \T
\qquad \text{forms a Markov chain.}
\label{eq:sufficient}
\end{equation}
\end{definition}

\begin{remark}
When we need to distinguish the map from the induced random
variable, we write $Z := \Ph(\T)$.
\end{remark}

\begin{remark}
Condition~\eqref{eq:sufficient} is equivalent to
$\MI(\Cb;\T \mid \Ph(\T)) = 0$, and to the existence of a regular
conditional kernel $\kappa(c \mid z)$ such that
$p(\Cb \in A \mid \T = t) = \kappa(A \mid \Ph(t))$ almost surely.
When $p(\Cb,\T)$ is given, the conditional law
$p(\Cb \mid \T=\cdot)$, viewed as an element of the space of
probability measures on $\mathcal{C}$, is a canonical sufficient
statistic. Every other sufficient statistic determines it almost
surely. In practice a convenient sufficient statistic is often much
simpler (e.g.\ a
low-dimensional projection, a posterior vector in a classification
problem, or the parameters of an exponential family).
\end{remark}

\begin{remark}
By the data-processing inequality, sufficiency of $\Ph$ implies
$\MI(\Ph;\Cb) \le \MI(\T;\Cb) < \infty$. Consequently every
relevance term $\MI(\Xb;\Cb)$ is finite. The rate
$\MI(\Xb;\Ph)$ may still be infinite for an unrestricted encoder. All
infima below are therefore read in the extended-real sense.
\end{remark}
Throughout we write $\MI_p$ for mutual information computed under
distribution $p$. We omit the subscript when the distribution is clear
from context.

\section{The reduction theorem}
\label{sec:main}

\begin{theorem}[IB reduction]
\label{thm:main}
Let $(\T,\Cb)$ be random variables on standard Borel spaces with
$\MI(\T;\Cb) < \infty$, and let $\Ph : \mathcal{T} \to \mathcal{Z}$
be sufficient for $\Cb$ given $\T$ in the sense of
\eqref{eq:sufficient}. Write $Z := \Ph(\T)$. The following statements
hold.
\begin{enumerate}
\item[(i)] \textbf{Curve preservation.}
$\mathcal{I}_{\T,\Cb}(R) = \mathcal{I}_{Z,\Cb}(R)$ for all
$R \ge 0$.
\item[(ii)] \textbf{Lagrangian preservation.} For every
$\beta \ge 0$,
\begin{equation}
\inf_{p(\Xb\mid\T)} \bigl[\MI(\Xb;\T) - \beta\,\MI(\Xb;\Cb)\bigr]
\;=\;
\inf_{p(\Xb\mid Z)} \bigl[\MI(\Xb;Z) - \beta\,\MI(\Xb;\Cb)\bigr],
\label{eq:lagrangian-equal}
\end{equation}
where the left-hand infimum is over conditionals that satisfy
\eqref{eq:markov} and the right-hand infimum is over conditionals
that satisfy the analogous Markov chain
$\Cb \leftrightarrow Z \leftrightarrow \Xb$.
\item[(iii)] \textbf{Minimiser correspondence.} For any fixed
$\beta\geq0$, if
$p^\star(\Xb \mid Z)$ attains the right-hand infimum in
\eqref{eq:lagrangian-equal}, then the conditional
\begin{equation}
p^\star(\Xb \mid \T = t) \;:=\; p^\star(\Xb \mid Z = \Ph(t))
\label{eq:pullback}
\end{equation}
attains the left-hand infimum and has the same $\MI(\Xb;\T)$ and
$\MI(\Xb;\Cb)$ values. Conversely, every minimiser
$p(\Xb\mid\T)$ of the left-hand problem satisfies
$p(\Xb\mid\T=t) = p^\star(\Xb\mid Z=\Ph(t))$ for $p(\T)$-almost every
$t$, for some $p^\star(\Xb\mid Z)$ that minimises the right-hand
problem.
\end{enumerate}
\end{theorem}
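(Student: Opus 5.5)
The plan is to build two maps between encoders and to control both coordinates, rate and relevance, along each map. Everything then follows.

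\emph{Pullback.} Given $q(\Xb\mid Z)$, set $p(\Xb\mid\T=t):=q(\Xb\mid Z=\Ph(t))$. Since $Z$ is a deterministic function of $\T$ and $\Xb$ depends on $\T$ only through $Z$, the joint law of $(Z,\Xb,\Cb)$ is unchanged. Both chains $\Cb\leftrightarrow\T\leftrightarrow\Xb$ and $\T\leftrightarrow Z\leftrightarrow\Xb$ hold. The chain rule gives
\[
\MI(\Xb;\T)=\MI(\Xb;Z)+\MI(\Xb;\T\mid Z)=\MI(\Xb;Z).
\]
The first equality uses $\MI(\Xb;\T,Z)=\MI(\Xb;\T)$, because $Z$ is a function of $\T$. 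The second uses the vanishing of the conditional term. Also $\MI(\Xb;\Cb)$ is unchanged.

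\emph{Fibre average.} Given $p(\Xb\mid\T)$ satisfying \eqref{eq:markov}, define $q(\Xb\in B\mid Z=z):=\int p(\Xb\in B\mid\T=t)\,p(dt\mid Z=z)$. This uses a regular conditional law of $\T$ given $Z$, which exists because the spaces are standard Borel. Under the original joint $p(\T,\Cb,\Xb)$, the pair $(Z,\Xb)$ already has law $p(z)q(\Xb\mid z)$. So $\MI_q(\Xb;Z)=\MI_p(\Xb;Z)$, and the chain rule above gives
\[
\MI_p(\Xb;\T)=\MI_q(\Xb;Z)+\MI_p(\Xb;\T\mid Z),
\]
valid in the extended reals. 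For relevance, I will show that under $p$ we have $\Cb\leftrightarrow Z\leftrightarrow\Xb$. Conditionally on $Z$, $\Cb$ and $\Xb$ are each generated from $\T$ by kernels. The law of $\Cb$ given $\T$ depends only on $Z$ by \eqref{eq:sufficient}, and $\Cb\perp\Xb\mid\T$. Hence, for measurable $A,B$,
\[
P(\Cb\in A,\Xb\in B\mid Z)=\E\bigl[\kappa(A\mid Z)\,p(B\mid\T)\mid Z\bigr]=\kappa(A\mid Z)\,q(B\mid Z).
\]
Thus the joint $(Z,\Cb,\Xb)$ under $p$ coincides with the one induced by $q$ composed with the reduced Markov chain, and $\MI_p(\Xb;\Cb)=\MI_q(\Xb;\Cb)$.

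\emph{Consequences.} For (i), the fibre average maps a feasible full-source encoder to a reduced one with no larger rate and equal relevance, which gives $\mathcal I_{\T,\Cb}\le\mathcal I_{Z,\Cb}$. Pullback gives the reverse inequality. For (ii) the same two maps apply. The fibre average lowers the Lagrangian by $\MI(\Xb;\T\mid Z)\ge0$, and pullback preserves it, so the two infima agree. Relevance is finite, so $\infty-\beta\cdot$finite causes no ambiguity.

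For (iii), pullback of a reduced minimiser attains the common infimum. For the converse, take a full minimiser $p$ whose value is finite; if the infimum were $+\infty$ the statement would be vacuous, but it is at most $0$ via a constant encoder. Its fibre average $q$ has value equal to $\mathcal L(p)-\MI_p(\Xb;\T\mid Z)$. Minimality forces $\MI_p(\Xb;\T\mid Z)=0$, and then $q$ is a reduced minimiser. From $\MI(\Xb;\T\mid Z)=0$ we get $\Xb\perp\T\mid Z$, that is, $p(\Xb\mid\T=t)=q(\Xb\mid Z=\Ph(t))$ for almost every $t$.

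\emph{Main obstacle.} I expect the main difficulty to be measure-theoretic rigour rather than algebra. Three points need care: the existence of the disintegration along the fibres; the chain rule and the characterisation ``$\MI(\Xb;\T\mid Z)=0$ iff conditional independence'' in general standard Borel spaces with possibly infinite values; and converting conditional independence into an a.s.\ equality of kernels. For these I will rely on Dobrushin/Pinsker-type definitions of mutual information as a supremum over finite partitions, or as the divergence between the joint law and the product of marginals. On standard Borel spaces the chain rule and the zero-iff-independence characterisation hold in that setting.
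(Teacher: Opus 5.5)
Your proposal is correct and follows essentially the same route as the paper: the pullback (rate preserved by the chain rule with $\Xb\perp\T\mid Z$, relevance preserved via the unchanged $(\Cb,Z,\Xb)$ marginal), then the fibre average with the conditional factorisation $\kappa(A\mid Z)\,q(B\mid Z)$, which gives the identity $\MI_p(\Xb;\T)=\MI_q(\Xb;Z)+\MI_p(\Xb;\T\mid Z)$, and finally (i)--(iii) read off from these two maps. One small addition on your side is the explicit check that the common infimum is finite (it lies between $-\beta\,\MI(\T;\Cb)$ and $0$) before cancelling in the converse of (iii); the paper leaves that step implicit.
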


\begin{remark}[Mutual information on standard Borel spaces]
\label{rem:mi-partition}
Mutual information is used throughout in the partition-supremum
sense of Gel'fand--Yaglom and
Pinsker~\cite{pinsker1964information}. For random variables $A$
and $B$ on standard Borel spaces $(\mathcal{A},\mathcal{B}_A)$
and $(\mathcal{B},\mathcal{B}_B)$,
\begin{equation*}
\MI(A;B) \;=\; \sup_{\mathcal{P}_A,\,\mathcal{P}_B}
\MI\bigl(\mathcal{P}_A(A);\,\mathcal{P}_B(B)\bigr),
\end{equation*}
where the supremum is over pairs of finite measurable partitions
of $\mathcal{A}$ and $\mathcal{B}$ and the right-hand side is the
elementary discrete mutual information of the partition-induced
random variables. This construction is well-posed without further
regularity, coincides with the Radon--Nikodym form
$\int \log(dp_{AB}/d(p_A \otimes p_B))\,dp_{AB}$ whenever the joint
law is absolutely continuous with respect to the product marginal,
and recovers the finite-alphabet definition when both spaces are
finite. All mutual informations appearing in
\eqref{eq:lagrangian-equal}, in items~(i)--(iii), and in the
hypothesis $\MI(\T;\Cb) < \infty$ are read in this sense.
\end{remark}

\begin{remark}
Item~(i) is the \emph{information-theoretic} content of the reduction.
No information about $\Cb$ is lost by restricting attention to
$Z=\Ph(\T)$. Item~(ii) is the \emph{variational} content. The IB
Lagrangian literally takes the same value on the two problems at
every $\beta$. Item~(iii) is the \emph{algorithmic} content. A
solver run on the reduced problem yields a solver on the original
via pullback.
\end{remark}

\begin{remark}[Attainment]
\label{rem:attainment}
On finite alphabets with a bounded representation-alphabet size
$|\mathcal{X}|$, the constraint sets in
\eqref{eq:lagrangian-equal} are compact and both objectives are
continuous, so the two infima are attained and $\inf$ may be
replaced by $\min$. In the general standard-Borel setting used
throughout the paper, attainment is not automatic. Parts~(i)--(ii)
compare encoders directly and do not require it. Item~(iii) is
conditional on the existence of a minimiser, as its statement makes
explicit.
\end{remark}

\begin{figure}[!ht]
\centering
\begin{tikzpicture}[
  node distance=1.5cm and 1.6cm,
  rv/.style={draw, circle, minimum size=0.9cm, inner sep=1pt},
  arr/.style={->, >=Latex, thick}
]
  \node[rv] (C) {$\Cb$};
  \node[rv, right=of C] (Z) {$\Ph(\T)$};
  \node[rv, right=of Z] (T) {$\T$};
  \node[rv, right=of T] (X) {$\Xb$};

  \draw[arr] (C) -- (Z);
  \draw[arr] (Z) -- (T);
  \draw[arr] (T) -- (X) node[midway, above, font=\small] {$p(\Xb\mid\T)$};

  \draw[arr, dashed] (Z) to[bend right=32]
    node[midway, below, font=\small] {$q(\Xb\mid\Ph(\T))$} (X);
\end{tikzpicture}
\caption{Markov structure and pullback. The IB chain
$\Cb \leftrightarrow \T \leftrightarrow \Xb$ factors through the
sufficient statistic $\Ph(\T)$. The full chain reads
$\Cb \leftrightarrow \Ph(\T) \leftrightarrow \T \leftrightarrow \Xb$
(top row). A reduced encoder $q(\Xb\mid\Ph(\T))$ on the reduced
source $(\Ph(\T),\Cb)$ has pullback
$\tilde{p}(\Xb \mid \T = t) := q(\Xb \mid \Ph(t))$ (dashed). It realises
the same $(\MI(\Xb;\T), \MI(\Xb;\Cb))$ pair by
\Cref{lem:extension}.}
\label{fig:f1-schematic}
\end{figure}

\begin{corollary}[Computational reduction]
\label{cor:computation}
If $Z=\Ph(\T)$ takes values in a finite set of size $K$ and $\Cb$
takes values in a finite set of size $M$, the IB curve of
$(\T,\Cb)$ is determined entirely by the $K \times M$ joint
$p(Z,\Cb)$. A Blahut--Arimoto iteration on that reduced joint has
cost independent of
$\dim \mathcal{T}$ and of the marginal $p(\T)$ beyond what is
required to estimate $p(Z,\Cb)$.
\end{corollary}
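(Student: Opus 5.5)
The plan is to obtain the corollary directly from \Cref{thm:main}(i), with no new information-theoretic argument. Item~(i) gives $\mathcal{I}_{\T,\Cb}(R)=\mathcal{I}_{Z,\Cb}(R)$ for every $R\ge0$. Item~(ii) gives equality of the Lagrangian infima at every $\beta$. It therefore suffices to show that the right-hand objects depend only on the law $p(Z,\Cb)$.

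For that step, note that for any encoder $q(\Xb\mid Z)$ obeying $\Cb\leftrightarrow Z\leftrightarrow\Xb$, the joint law of $(Z,\Cb,\Xb)$ is $p(z,c)\,q(x\mid z)$. Both $\MI(\Xb;Z)$ and $\MI(\Xb;\Cb)$ are functionals of this joint law, the latter through the marginal $\sum_z p(z,c)q(x\mid z)$. The feasible set of encoders and the attained pairs $(\MI(\Xb;Z),\MI(\Xb;\Cb))$ are thus determined by the $K\times M$ array $p(Z,\Cb)$. Hence so are $\mathcal{I}_{Z,\Cb}$ and, by item~(i), $\mathcal{I}_{\T,\Cb}$.

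For the computational clause, I will argue as follows. The standard Blahut--Arimoto-type IB fixed-point iteration on finite alphabets updates $q(x\mid z)\propto q(x)\exp(-\beta D(p(\Cb\mid z)\,\|\,q(\Cb\mid x)))$, and then updates $q(x)$ and $q(c\mid x)$. Each sweep uses only $p(z)$ and $p(c\mid z)$, both read off $p(Z,\Cb)$. Its cost per iteration is polynomial in $K$, $M$ and $|\mathcal{X}|$, and does not involve $\mathcal{T}$.

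Restricting to bounded $|\mathcal{X}|$ loses nothing, since finite-alphabet cardinality bounds ensure the IB curve is attained with $|\mathcal{X}|\le K+1$. This is the one point where I would cite a known support lemma rather than prove it. The only dependence on $p(\T)$ is through estimating $p(Z,\Cb)$, i.e.\ pushing $p(\T,\Cb)$ forward under $\Ph\times\mathrm{id}$. Finally, \Cref{thm:main}(iii) transfers any minimiser back to $\T$ by pullback.

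No real obstacle arises. The one care point is that the statement says the curve is ``determined'' by $p(Z,\Cb)$; this should be read as a statement about the reduced problem once item~(i) is invoked, not as a claim that $p(\T,\Cb)$ is recoverable.
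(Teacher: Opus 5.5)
Your proposal is correct and follows the paper's route: Theorem~\ref{thm:main}(i) identifies the full curve with that of the reduced $K\times M$ pair. The support lemma then bounds $|\mathcal{X}|\le K+1$, and each Blahut--Arimoto sweep touches only $p(Z,\Cb)$. The paper goes further on cost, stating $O(K^2M)$ per iteration and $O(N)$ to build the empirical joint, and notes that the claim concerns per-iteration cost rather than global convergence of a non-convex fixed-point scheme; your explicit check that the reduced joint $p(z,c)\,q(x\mid z)$ depends only on $p(Z,\Cb)$ is a small detail the paper leaves implicit.
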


\begin{proof}[Proof of Corollary~\ref{cor:computation}]
By Theorem~\ref{thm:main}(i), the IB curve of $(\T,\Cb)$ is the IB
curve of the $K \times M$ discrete distribution $p(\Ph,\Cb)$. The
Blahut--Arimoto
algorithm~\cite{blahut1972computation,arimoto1972algorithm,tishby1999information}
for this problem maintains a $K \times |\mathcal{X}|$ encoder matrix
(where $|\mathcal{X}|$ is the representation alphabet size, which
by the support
lemma~\cite[Lemma~15.4]{csiszar2011information} may be taken to
satisfy $|\mathcal{X}| \le K+1$. See also
\cite[Appendix~C]{elgamal2011network}) and iterates two fixed-point
equations, each at cost $O(KM|\mathcal{X}|)$. The cost of one
iteration is therefore $O(K^2 M)$, independent of
$\dim\mathcal{T}$ and of the structure of $p(\T)$ beyond the
$K \times M$ matrix $p(\Ph,\Cb)$. The cost of building that matrix
from $N$ samples is $O(N)$ (one pass over the data to compute
$\Ph(T_i)$ for each sample $i$), also independent of
$\dim\mathcal{T}$ given an oracle for $\Ph$.
Blahut--Arimoto is a fixed-point method for a generally non-convex
problem. This corollary concerns the dimension and per-iteration cost
of the reduced optimisation, not a global-convergence guarantee.
\end{proof}

When $Z$ is not finite, the reduced problem remains
infinite-dimensional, and quantisation or variational approximation
is required before \Cref{cor:computation} applies.

\begin{corollary}[Degenerate case and recovery of Kolchinsky et al.]
\label{cor:kolchinsky}
Suppose $\Cb$ takes values in a finite alphabet and
$\Cb = h(\T)$ almost surely for some measurable
$h : \mathcal{T} \to \mathcal{C}$. Then $h$ is sufficient for
$\Cb$ given $\T$ in the sense of
\Cref{def:sufficient-statistic}, and
Theorem~\ref{thm:main} applies with $\Ph := h$ and
$Z = h(\T) = \Cb$. In this degenerate case, the following statements
hold.
\begin{enumerate}
\item[(a)] The IB curve of the reduced pair is the diagonal,
$\mathcal{I}_{\Cb,\Cb}(R) = \min\{R, H(\Cb)\}$ for $R \ge 0$, and
Theorem~\ref{thm:main}(i) gives
$\mathcal{I}_{\T,\Cb}(R) = \min\{R, H(\Cb)\}$ for the original
pair.
\item[(b)] The IB Lagrangian~\eqref{eq:ib-lagrangian} has a single
kink at $\beta = 1$. Its infimum is $0$ for $\beta \le 1$
(attained by the constant encoder $\Xb \equiv \mathrm{const}$)
and $(1-\beta)\,H(\Cb)$ for $\beta \ge 1$ (attained by the
deterministic labelling $\Xb = \Cb$). At $\beta = 1$, every point
on the diagonal segment in part~(a) is attained by a minimiser.
\item[(c)] The pullback~\eqref{eq:pullback} identifies the
$\Xb = \Cb$ minimiser of the reduced problem with $\Xb = h(\T)$
on the original problem.
\end{enumerate}
\end{corollary}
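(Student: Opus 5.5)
Sufficiency of $h$ is immediate. Since $\Cb=h(\T)$ almost surely, the conditional law $p(\Cb\mid\T=t)$ is the point mass $\delta_{h(t)}$. This is a measurable function of $h(t)$, so \eqref{eq:sufficient} holds with $\Ph:=h$ and $Z=h(\T)=\Cb$ almost surely. Also $\MI(\T;\Cb)=H(\Cb)\le\log|\mathcal{C}|<\infty$, so Theorem~\ref{thm:main} applies. The plan is therefore to solve the reduced problem on the pair $(\Cb,\Cb)$ directly and then transfer each conclusion through Theorem~\ref{thm:main}(i)--(iii).

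For (a), the reduced problem has source $Z=\Cb$ and relevance variable $\Cb$. Any encoder $p(\Xb\mid\Cb)$ satisfies $\MI(\Xb;\Cb)=\MI(\Xb;Z)\le R$, and also $\MI(\Xb;\Cb)\le H(\Cb)$. Hence $\mathcal{I}_{\Cb,\Cb}(R)\le\min\{R,H(\Cb)\}$. For achievability:
\begin{itemize}
\item If $R\ge H(\Cb)$, take $\Xb=\Cb$.
\item If $R<H(\Cb)$, use a time-sharing erasure encoder: $\Xb=\Cb$ with probability $\lambda$ and $\Xb=e$ (a fresh erasure symbol) otherwise, with the erasure choice independent of $\Cb$. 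A direct computation gives $\MI(\Xb;\Cb)=\lambda H(\Cb)$. Taking $\lambda=R/H(\Cb)$ achieves $R$ exactly.
\end{itemize}
Theorem~\ref{thm:main}(i) then gives $\mathcal{I}_{\T,\Cb}(R)=\min\{R,H(\Cb)\}$.

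For (b), Theorem~\ref{thm:main}(ii) reduces the question to the reduced Lagrangian. There the objective is $\MI(\Xb;\Cb)-\beta\MI(\Xb;\Cb)=(1-\beta)\MI(\Xb;\Cb)$, where $\MI(\Xb;\Cb)$ ranges over the whole interval $[0,H(\Cb)]$ by the erasure construction. A linear function of $\MI(\Xb;\Cb)$ is minimised at an endpoint:
\begin{itemize}
\item for $\beta\le1$, at $\MI=0$, attained by a constant encoder, with infimum $0$;
\item for $\beta\ge1$, at $\MI=H(\Cb)$, attained by $\Xb=\Cb$, with infimum $(1-\beta)H(\Cb)$.
\end{itemize}
At $\beta=1$ the objective vanishes identically, so every erasure encoder is a minimiser. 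These encoders realise every point of the diagonal segment. The kink at $\beta=1$ comes from the two linear pieces $0$ and $(1-\beta)H(\Cb)$. For the original problem I will state attainment via the pullbacks from Theorem~\ref{thm:main}(iii), which preserve both information coordinates. One can also check directly on the full source: $\MI(\Xb;\T)\ge\MI(\Xb;\Cb)$ by data processing, and any encoder of the form $p(\Xb\mid h(\T))$ has equality.

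For (c), apply \eqref{eq:pullback} to $p^\star(\Xb\mid Z)=\delta_{Z}$. This gives $p(\Xb\mid\T=t)=\delta_{h(t)}$, that is, $\Xb=h(\T)$. The only mild care point in the whole argument is the erasure computation: $\MI(\Xb;\Cb)=H(\Cb)-H(\Cb\mid\Xb)=H(\Cb)-(1-\lambda)H(\Cb)$, because given $\Xb=e$ the posterior of $\Cb$ is its prior. Beyond this, everything is bookkeeping.
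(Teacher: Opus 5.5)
Your proposal is correct and follows the paper's proof essentially step for step: sufficiency via the point mass $\delta_{h(t)}$, the erasure construction for (a), and the pullback of $\delta_Z$ for (c). You also add two points of care the paper omits, namely checking $\MI(\T;\Cb)=H(\Cb)<\infty$ and insisting that the erasure symbol lie outside $\mathcal{C}$. The only variation is in (b). There you minimise $(1-\beta)\MI(\Xb;\Cb)$ on the reduced pair and transfer the result via Theorem~\ref{thm:main}(ii)--(iii). The paper instead argues directly on the full source, using data processing to get $\MI(\Xb;\T)-\beta\,\MI(\Xb;\Cb)\ge(1-\beta)\,\MI(\Xb;\Cb)$, which is the direct check you also mention.
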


\begin{proof}
For sufficiency, $p(\Cb \mid \T = t)$ is the point mass at $h(t)$, so
$\Cb$ is a deterministic function of $h(\T)$. The chain
$\Cb \leftrightarrow h(\T) \leftrightarrow \T$ is trivially Markov,
and $h$ is sufficient by \Cref{def:sufficient-statistic}.

Part~(a). On the reduced pair with $Z = \Cb$, the constraint chain
$\Cb \leftrightarrow \Cb \leftrightarrow \Xb$ imposes only that
$\Xb$ depend on $\Cb$ alone, and the rate constraint
$\MI(\Xb;Z) \le R$ becomes $\MI(\Xb;\Cb) \le R$ directly. The IB
objective on the reduced pair is therefore
$\sup\{\MI(\Xb;\Cb) : \MI(\Xb;\Cb) \le R\} = \min\{R, H(\Cb)\}$.
For $R \ge H(\Cb)$ the identity encoder $\Xb = \Cb$ attains
$\MI(\Xb;\Cb) = H(\Cb)$. For $R \in [0, H(\Cb))$, the erasure
mixture that outputs $\Cb$ with probability $R/H(\Cb)$ and a
constant symbol otherwise achieves $\MI(\Xb;\Cb) = R$, so the
supremum is attained and equals $\min\{R, H(\Cb)\}$.
\Cref{thm:main}(i) transfers this value to the original pair.

Part~(b). For any encoder $p(\Xb \mid \T)$ satisfying
\eqref{eq:markov}, the data-processing inequality gives
$\MI(\Xb;\Cb) \le \MI(\Xb;\T)$, hence
\begin{equation*}
\MI(\Xb;\T) - \beta\,\MI(\Xb;\Cb)
\;\ge\; (1-\beta)\,\MI(\Xb;\Cb).
\end{equation*}
For $\beta \le 1$ the right-hand side is nonnegative and vanishes
at $\MI(\Xb;\Cb) = 0$. The constant encoder attains this.
For $\beta \ge 1$ the coefficient $(1-\beta) \le 0$, so
$(1-\beta)\,\MI(\Xb;\Cb) \ge (1-\beta)\,H(\Cb)$ follows from
$\MI(\Xb;\Cb) \le H(\Cb)$. The deterministic labelling
$\Xb = h(\T) = \Cb$ gives $\MI(\Xb;\T) = \MI(\Xb;\Cb) = H(\Cb)$ and
attains the bound. Both branches evaluate to $0$ at $\beta = 1$,
so the infimum $\beta \mapsto \inf_{p(\Xb\mid\T)} \mathcal{L}_\beta$
is $\min\{0,(1-\beta) H(\Cb)\}$ with a single kink there. At
$\beta=1$, every erasure mixture from part~(a) has
$\MI(\Xb;\T)=\MI(\Xb;\Cb)$ and is therefore a minimiser. These
mixtures attain every point on the diagonal segment.

Part~(c). Take $p^\star(\Xb \mid \Cb) := \delta_{\Cb}$ (the
identity encoder), then \eqref{eq:pullback} gives
$p^\star(\Xb \mid \T = t) = \delta_{h(t)}$, i.e.\
$\Xb = h(\T) = \Cb$ on the original problem.
\end{proof}

\begin{remark}
The corollary explains the deterministic-relevance pathologies
documented by Kolchinsky et al.~\cite{kolchinsky2019caveats} as
the pullback through $h$ of the diagonal IB curve on
$(\Cb, \Cb)$. Their observation that the IB Lagrangian's optimal
representation collapses to a constant for $\beta \le 1$ and jumps
to a deterministic labelling for $\beta > 1$ is the degeneration of
the common Lagrangian to the single kink at $\beta = 1$ recorded in
part~(b) above. At the kink, every point on the diagonal segment is
optimal. The reduction thus identifies the pathology as a structural
degeneracy of the deterministic-relevance problem and a corresponding
limitation of linear scalarisation on its non-strictly-concave
frontier.
\end{remark}

\section{Proof of Theorem~\ref{thm:main}}
\label{sec:proof}

The only preliminary result needed is the pullback identity. The proof
carries out the reverse operation directly through a fibrewise encoder
average.

\begin{lemma}[Pullback preserves Markov structure and mutual informations]
\label{lem:extension}
Let $\Ph$ be sufficient for $\Cb$ given $\T$ in the sense of
\eqref{eq:sufficient}, and let $q(\Xb \mid \Ph)$ be any conditional
distribution such that $\Cb \leftrightarrow \Ph \leftrightarrow \Xb$
is Markov under $p(\Ph,\Cb)\,q(\Xb\mid\Ph)$. Define the pullback
conditional
$$
\tilde{p}(\Xb \mid \T = t) \;:=\; q(\Xb \mid \Ph = \Ph(t)).
$$
Let $\tilde{p}(\T,\Cb,\Xb) = p(\T,\Cb)\,\tilde{p}(\Xb\mid\T)$ be
the induced joint. The following statements hold.
\begin{enumerate}
\item[(a)] $\Cb \leftrightarrow \T \leftrightarrow \Xb$ is Markov
under $\tilde{p}$, so $\tilde{p}(\Xb\mid\T)$ is a feasible encoder
for the left-hand IB problem.
\item[(b)] $\MI_{\tilde{p}}(\Xb;\T) = \MI_q(\Xb;\Ph)$.
\item[(c)] $\MI_{\tilde{p}}(\Xb;\Cb) = \MI_q(\Xb;\Cb)$, where both
sides are computed under their respective induced joints over
$(\Ph,\Cb,\Xb)$.
\end{enumerate}
\end{lemma}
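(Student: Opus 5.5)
The plan is to build the joint law under $\tilde{p}$ explicitly and observe that $\Xb$ is produced from $\T$ only through $Z=\Ph(\T)$. Under $\tilde{p}$, sample $(\T,\Cb)\sim p(\T,\Cb)$, set $Z=\Ph(\T)$, and draw $\Xb\sim q(\cdot\mid Z)$. This is a measurable composition of kernels on standard Borel spaces, so the law is well defined. By construction, $\Xb$ is conditionally independent of $(\T,\Cb)$ given $\T$. Indeed, conditional on $\T=t$ the law of $\Xb$ is $q(\cdot\mid\Ph(t))$ regardless of $\Cb$. This gives (a) at once. Since $Z$ is a function of $\T$, the same construction also shows that $\Xb$ is conditionally independent of $(\T,\Cb)$ given $Z$. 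So under $\tilde{p}$ we have the chain $\Cb\leftrightarrow Z\leftrightarrow\T\leftrightarrow\Xb$, together with the stronger statement $(\Cb,\T)\leftrightarrow Z\leftrightarrow\Xb$.

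For (b), I would compare $\MI_{\tilde p}(\Xb;\T)$ with $\MI_{\tilde p}(\Xb;Z)$ using Kolmogorov's chain rule for mutual information on standard Borel spaces, as in Pinsker~\cite{pinsker1964information}. This chain rule holds in the partition-supremum sense of \Cref{rem:mi-partition} and allows infinite values. Since $Z$ is $\sigma(\T)$-measurable, $\MI(\Xb;\T)=\MI(\Xb;\T,Z)=\MI(\Xb;Z)+\MI(\Xb;\T\mid Z)$. The last term vanishes because $\T\leftrightarrow Z\leftrightarrow\Xb$. Finally, the marginal of $(Z,\Xb)$ under $\tilde{p}$ is exactly $p(Z)\,q(\Xb\mid Z)$, which is the joint defining $\MI_q(\Xb;Z)$. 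So $\MI_{\tilde p}(\Xb;Z)=\MI_q(\Xb;\Ph)$. If the chain rule is awkward when both sides may be infinite, a fallback is to go in two directions. The inequality $\MI(\Xb;Z)\le\MI(\Xb;\T)$ is data processing, since $Z$ is a function of $\T$. For the reverse inequality, $\T\leftrightarrow Z\leftrightarrow\Xb$ gives $\MI(\Xb;\T)\le\MI(\Xb;Z)$, again by data processing.

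For (c), I would show that the $(Z,\Cb,\Xb)$-marginal of $\tilde{p}$ equals $p(Z,\Cb)\,q(\Xb\mid Z)$. Take measurable rectangles $B\times A\times D$. Integrating $\tilde{p}$ over $\{\Ph(t)\in B,\,c\in A\}$ gives $\int_{\Ph^{-1}(B)\times A} q(D\mid\Ph(t))\,p(dt,dc)$. By the change of variables through $\Ph$, this equals $\int_{B\times A} q(D\mid z)\,p_{Z\Cb}(dz,dc)$. A $\pi$-$\lambda$ argument then extends the identity to all measurable sets. Once the $(\Cb,\Xb)$-marginals agree, the mutual informations agree, because $\MI(\Xb;\Cb)$ depends only on the joint law of $(\Xb,\Cb)$. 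Note that (c) does not even need the sufficiency hypothesis. Sufficiency is only used to make the reduced problem the right comparison object.

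I expect the main obstacle to be measure-theoretic rather than conceptual. One issue is verifying that $\tilde{p}(\cdot\mid t)$ is a genuine kernel, which holds because it is the composition $q\circ\Ph$ of a kernel with a measurable map. The other is justifying the conditional independence and data-processing steps for possibly infinite mutual information in the partition-supremum sense. I would handle both by citing the standard-Borel versions of data processing and of the chain rule, and by using the two-sided data-processing argument for (b), which avoids subtracting infinities.
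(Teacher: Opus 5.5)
Your proposal is correct and follows essentially the same route as the paper: (a) holds by construction, (b) follows from the chain rule $\MI(\Xb;\T)=\MI(\Xb;Z)+\MI(\Xb;\T\mid Z)$, where the conditional term vanishes because $\Xb$ depends on $\T$ only through $Z$, and (c) follows by identifying the $(Z,\Cb,\Xb)$-marginal of $\tilde p$ with $p(Z,\Cb)\,q(\Xb\mid Z)$. Your two-sided data-processing fallback for (b) and your rectangle/$\pi$-$\lambda$ check for (c) only make these steps more explicit, and your remark that (c) needs no sufficiency is correct, since the factorisation comes from the encoder depending on $\T$ only through $Z$ rather than from sufficiency, as the paper's wording suggests.
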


\begin{proof}
The encoder is generated from $\T$ alone, so~(a) holds by definition.
It also depends on $\T$ only through $\Ph$, hence
$\Xb\perp\T\mid\Ph$. The chain rule therefore gives
\begin{equation*}
\MI_{\tilde p}(\Xb;\T)
=\MI_{\tilde p}(\Xb;\Ph)
 +\MI_{\tilde p}(\Xb;\T\mid\Ph)
=\MI_{\tilde p}(\Xb;\Ph).
\end{equation*}
The $(\Ph,\Xb)$ marginal under $\tilde p$ is
$p(\Ph)q(\Xb\mid\Ph)$, which proves~(b). Sufficiency similarly
factorises the three-variable marginal as
\begin{equation*}
\tilde p(\Cb,\Ph,\Xb)
=p(\Cb,\Ph)q(\Xb\mid\Ph).
\end{equation*}
This is exactly the reduced joint, so its $(\Cb,\Xb)$ marginal and
therefore $\MI(\Xb;\Cb)$ are unchanged, proving~(c).
\end{proof}

\begin{proof}[Proof of Theorem~\ref{thm:main}]
There is one construction behind all three claims. For any feasible
$p(\Xb\mid\T)$, define its average over each fibre of $Z=\Ph(\T)$ by
\begin{equation}
q(\Xb\in A\mid Z=z)
  := \int p(\Xb\in A\mid\T=t)\,p(\T\in dt\mid Z=z).
\label{eq:fibre-encoder}
\end{equation}
The required disintegration exists on standard Borel spaces. See
\Cref{app:measurability}. We call $q$ the fibre average of $p$.

Sufficiency and feasibility give
$\Cb\leftrightarrow Z\leftrightarrow\T$ and
$\Cb\leftrightarrow\T\leftrightarrow\Xb$. Hence, conditional on
$Z=z$, the joint law of $(\Cb,\Xb)$ under $p$ is
\[
\int p(\Cb\in dc\mid Z=z)\,
     p(\Xb\in dx\mid\T=t)\,p(\T\in dt\mid Z=z)
=p(\Cb\in dc\mid Z=z)q(\Xb\in dx\mid Z=z).
\]
Thus the $(\Cb,Z,\Xb)$ marginal under $p$ is exactly the joint induced
by $p(\Cb,Z)q(\Xb\mid Z)$. In particular,
\begin{equation}
\MI_p(\Xb;\Cb)=\MI_q(\Xb;\Cb),\qquad
\MI_p(\Xb;Z)=\MI_q(\Xb;Z).
\label{eq:fibre-preserves}
\end{equation}
Since $Z$ is a function of $\T$, the chain rule also gives
\begin{equation}
\MI_p(\Xb;\T)
=\MI_q(\Xb;Z)+\MI_p(\Xb;\T\mid Z).
\label{eq:fibre-rate}
\end{equation}
We refer to \eqref{eq:fibre-rate} as the \emph{fibre-average
identity}.

We first prove~(i). If $p$ is feasible at rate $R$ for the full
problem, \eqref{eq:fibre-preserves}--\eqref{eq:fibre-rate} show that
its fibre average is feasible at rate at most $R$ for the reduced problem and
has the same relevance. Therefore
$\mathcal I_{\T,\Cb}(R)\leq\mathcal I_{Z,\Cb}(R)$. Conversely, any
reduced encoder $q(\Xb\mid Z)$ pulls back to
$\tilde p(\Xb\mid\T=t):=q(\Xb\mid Z=\Ph(t))$. By
\Cref{lem:extension}, the pullback has exactly the same rate and
relevance, so
$\mathcal I_{\T,\Cb}(R)\geq\mathcal I_{Z,\Cb}(R)$. This proves curve
equality directly.

For~(ii), subtract $\beta$ times the first identity in
\eqref{eq:fibre-preserves} from \eqref{eq:fibre-rate} to obtain
\begin{equation}
\bigl[\MI_p(\Xb;\T)-\beta\MI_p(\Xb;\Cb)\bigr]
=\bigl[\MI_q(\Xb;Z)-\beta\MI_q(\Xb;\Cb)\bigr]
 +\MI_p(\Xb;\T\mid Z).
\label{eq:fibre-lagrangian}
\end{equation}
Every full encoder is therefore no better than its fibre average,
whereas every reduced encoder has a pullback with the same objective
value. The infima therefore prove~(ii).

For~(iii), the pullback of a reduced minimiser attains the common
infimum by \Cref{lem:extension}. Conversely, if $p$ minimises the full
problem, then \eqref{eq:fibre-lagrangian} and the equality of the two
infima force both its fibre average to minimise the reduced problem and
$\MI_p(\Xb;\T\mid Z)=0$. The latter equality is exactly
$\Xb\perp\T\mid Z$, so $p$ agrees almost surely with the pullback of
its fibre average.

\end{proof}

\begin{remark}[Per-encoder wasted-rate decomposition]
\label{rem:wasted-rate}
Let $q$ be the fibre average of a full encoder $p$. The difference
between \eqref{eq:fibre-lagrangian} and the common optimum is
\begin{equation}
\bigl[\MI_p(\Xb;\T) - \beta\,\MI_p(\Xb;\Cb)\bigr] - \mathcal{L}_\beta^{\T}
\;=\;
\Bigl[\bigl(\MI_q(\Xb;Z) - \beta\,\MI_q(\Xb;\Cb)\bigr) - \mathcal{L}_\beta^{Z}\Bigr]
   + \MI_p(\Xb;\T\mid Z),
\qquad \beta \ge 0,
\label{eq:wasted-rate}
\end{equation}
where $\mathcal L_\beta^{\T}=\mathcal L_\beta^{Z}$ by
\Cref{thm:main}(ii). Both terms on the right are nonnegative. Thus an
encoder within $\delta$ of optimal spends at most $\delta$ nats on
variation in $\T$ beyond $Z=\Ph(\T)$. This is the quantitative form of
the converse in \Cref{thm:main}(iii).
\end{remark}

\section{Gaussian specialisation and the Chechik et al. solution}
\label{sec:gaussian}

We now specialise Theorem~\ref{thm:main} to the setting of
\cite{chechik2005gaussian} and show that their closed-form Gaussian
IB solution is an immediate corollary of the reduction. We then
state an explicit nonlinear generalisation.

\subsection{The linear-Gaussian case}

Let $\T \in \R^{d_\T}$ and $\Cb \in \R^{d_\Cb}$ be jointly
Gaussian with zero mean and joint covariance
\[
\Sigma \;=\;
\begin{pmatrix}
\Sigma_{\T\T} & \Sigma_{\T\Cb} \\
\Sigma_{\Cb\T} & \Sigma_{\Cb\Cb}
\end{pmatrix},
\qquad \Sigma_{\T\T} \text{ invertible.}
\]
Then
\begin{equation}
\Cb \mid \T
\;\sim\;
\mathcal{N}\!\bigl(
\Sigma_{\Cb\T}\Sigma_{\T\T}^{-1}\T,\;
\Sigma_{\Cb\Cb} - \Sigma_{\Cb\T}\Sigma_{\T\T}^{-1}\Sigma_{\T\Cb}
\bigr),
\label{eq:gaussian-cond}
\end{equation}
so $p(\Cb\mid\T)$ depends on $\T$ only through the conditional-mean
vector
\[
\Ph(\T) \;:=\; \Sigma_{\Cb\T}\Sigma_{\T\T}^{-1}\T \;\in\; \R^{d_\Cb}.
\]
This is a linear map of rank $r :=
\mathrm{rank}(\Sigma_{\Cb\T}\Sigma_{\T\T}^{-1})
\le \min\{d_\Cb,d_\T\}$, and it
is sufficient for $\Cb$ given $\T$ in the sense
of~\eqref{eq:sufficient}. The same map is the minimum
mean-square-error estimator of $\Cb$ given $\T$.
Theorem~\ref{thm:main} therefore applies.

\begin{corollary}[Recovery of Chechik et al.]
\label{cor:chechik}
In the linear-Gaussian setting above, the following statements hold.
\begin{enumerate}
\item[(a)] The IB curve $\mathcal{I}_{\T,\Cb} =
\mathcal{I}_{\Ph(\T),\Cb}$ is the IB curve of the
$r$-dimensional pair $(\Ph(\T),\Cb)$, where
$r \le \min\{d_\Cb,d_\T\}$.
\item[(b)] Within the linear-Gaussian encoder class of
\textup{\cite{chechik2005gaussian}}, a reduced encoder pulls back to
a linear map on $\T$ of rank at most $r$. This recovers the effective
rank bound in their solution.
\item[(c)] The closed-form Gaussian IB calculation of
\textup{\cite{chechik2005gaussian}} gives the same optimum for the
reduced Gaussian pair $(\Ph(\T),\Cb)$ and the full pair $(\T,\Cb)$.
\end{enumerate}
\end{corollary}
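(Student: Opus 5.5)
The plan is to get all three items from Theorem~\ref{thm:main}, applied with the linear map $\Ph(\T)=\Sigma_{\Cb\T}\Sigma_{\T\T}^{-1}\T$. Its sufficiency follows from \eqref{eq:gaussian-cond}. The conditional covariance does not depend on $\T$, and the conditional mean is exactly $\Ph(\T)$, so $p(\Cb\mid\T)$ is a measurable function of $\Ph(\T)$ and \eqref{eq:sufficient} holds. The hypothesis $\MI(\T;\Cb)<\infty$ needs one line. If $\Sigma_{\Cb\Cb}-\Sigma_{\Cb\T}\Sigma_{\T\T}^{-1}\Sigma_{\T\Cb}$ is nonsingular it is the usual log-determinant formula. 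In general I would either assume this nondegeneracy, as \cite{chechik2005gaussian} implicitly does, or note that the statement is read in the extended sense.

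For (a), Theorem~\ref{thm:main}(i) gives $\mathcal{I}_{\T,\Cb}=\mathcal{I}_{Z,\Cb}$ with $Z=\Ph(\T)$. The only remaining point is dimension. $Z$ lies in the column space $V$ of $\Sigma_{\Cb\T}\Sigma_{\T\T}^{-1}$, which has dimension $r$. Choosing an isomorphism $V\cong\R^r$ gives an $r$-dimensional Gaussian vector $\tilde Z$. This is an invertible measurable relabelling, so mutual informations and the IB curve are unchanged. The rank bound $r\le\min\{d_\Cb,d_\T\}$ is immediate.

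For (b), take a linear-Gaussian reduced encoder $\Xb=A\tilde Z+\xi$ with $\xi$ independent Gaussian noise. Its pullback \eqref{eq:pullback} is $\Xb=AB\T+\xi$, where $B:\R^{d_\T}\to\R^r$ is the composite of $\Ph$ with the isomorphism. $B$ has rank $r$, so $AB$ has rank at most $r$. This encoder stays inside the linear-Gaussian class, and by Lemma~\ref{lem:extension} it has the same rate and relevance. Read in the converse direction via Theorem~\ref{thm:main}(iii), any minimiser in the full problem factors through $Z$ almost surely, so its projection matrix acts only through $B$. This gives the effective rank bound.

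For (c), the reduced pair $(\tilde Z,\Cb)$ is itself jointly Gaussian with computable covariances: $\Sigma_{ZZ}=\Sigma_{\Cb\T}\Sigma_{\T\T}^{-1}\Sigma_{\T\Cb}$ and $\Sigma_{Z\Cb}=\Sigma_{ZZ}$ before relabelling. The closed-form calculation therefore applies to it, and Theorem~\ref{thm:main}(ii) equates the Lagrangian optima at every $\beta$. One might also check directly that the eigenvalues of $\Sigma_{\T|\Cb}\Sigma_{\T\T}^{-1}$ that differ from one match those of the reduced pair, but the reduction makes this unnecessary.

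The main obstacle is the subtlety in (b)--(c) that the optimum of \cite{chechik2005gaussian} is an optimum over all encoders, not only Gaussian ones. I would cite their result that Gaussian encoders are optimal for Gaussian pairs, apply it to the reduced pair, and transfer the optimum via Theorem~\ref{thm:main}(ii)--(iii). This avoids re-deriving any eigenvalue identity.
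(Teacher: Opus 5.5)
Your proposal is correct and takes the same route as the paper: (a) is Theorem~\ref{thm:main}(i); (b) uses the fact that the pullback $\Xb=AB\T+\xi$ of a linear reduced encoder has rank at most $r$; and (c) applies the Chechik et al.\ calculation to the jointly Gaussian reduced pair and transfers the result with Theorem~\ref{thm:main}(ii). Your extra checks make explicit points the paper's short proof leaves implicit: finiteness of $\MI(\T;\Cb)$ under a nondegenerate conditional covariance, the relabelling of $Z$ to $\R^r$, use of Lemma~\ref{lem:extension} for the pullback's objective value, and Gaussian-encoder optimality so that the closed-form value is the infimum over all encoders.
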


\begin{proof}
Part~(a) is Theorem~\ref{thm:main}(i) applied to $\Ph$. For~(b),
write the conditional-mean map as $P\T$ with $\operatorname{rank}P=r$.
A linear reduced encoder with signal map $B$ pulls back to the signal
map $BP$ on $\T$, whose rank is at most $r$. Equality of its objective
value follows from Theorem~\ref{thm:main}(iii). For~(c), the pair
$(P\T,\Cb)$ is jointly Gaussian. Chechik et al.'s Gaussian
eigenvalue calculation applies directly to this reduced pair, and
Theorem~\ref{thm:main}(ii) transfers its value and pullback minimiser
to the full pair.
\end{proof}

\begin{remark}
\label{rem:chechik-wasting}
The reduction supplies a conceptual explanation of the effective
rank in the linear-Gaussian solution. The IB problem itself lives in
the $r$-dimensional image of $\Ph$, and any dependence on $\T$ beyond
this image spends rate without changing relevance. The linear rank
bound was originally established
in~\cite{chechik2005gaussian} via first-order analysis of the
Gaussian Lagrangian. \Cref{thm:main} re-derives it as a structural
consequence of input-side sufficiency.
\end{remark}

\subsection{Nonlinear-Gaussian generalisation}

The reduction does not require joint Gaussianity. Suppose only that
\begin{equation}
\Cb \mid \T
\;\sim\;
\mathcal{N}\!\bigl(\mu(\Ph(\T)),\;\Sigma_\Cb(\Ph(\T))\bigr)
\label{eq:nonlinear-gaussian}
\end{equation}
for some measurable $\Ph : \mathcal{T} \to \R^k$, where $\mu$ and
$\Sigma_\Cb$ are measurable functions of $\Ph(\T)$ alone. Then $\Ph$
is sufficient for $\Cb$ given $\T$, and Theorem~\ref{thm:main}
reduces the IB problem for $(\T,\Cb)$ to the IB problem for
$(\Ph(\T),\Cb)$, which involves only the $k$-dimensional statistic
$\Ph(\T)$ regardless of $\dim\mathcal{T}$.

\begin{remark}
Form~\eqref{eq:nonlinear-gaussian} covers all additive-Gaussian-noise
models $\Cb = f(\T) + \varepsilon$ with $\varepsilon$ Gaussian
(possibly with $f$-dependent variance), where $\Ph(\T) := f(\T)$.
It also covers conditional Gaussian models whose mean and covariance
are both functions of a lower-dimensional feature map. In either
case, sufficiency follows directly from the displayed conditional
law. No Gaussian optimality claim is required.
\end{remark}

\begin{remark}[Scope beyond the Gaussian case]
Form~\eqref{eq:nonlinear-gaussian} is not exhaustive.
Theorem~\ref{thm:main} applies to any conditional $p(\Cb\mid\T)$
that factors through a sufficient statistic $\Ph$, regardless of
whether the conditional is Gaussian. The Gaussian form is
highlighted because the factorisation is easy to verify. Except in
special cases such as the jointly Gaussian linear model, the reduced
problem need not have a closed-form solution. The theorem says only
that no information outside $\Ph(\T)$ is needed.
\end{remark}

\section{Operational form under logarithmic loss}
\label{sec:operational}

The reduction has an operational form that is visible at finite
blocklength. No coding theorem is needed for the reduction itself.
Throughout this section $\Cb$ is finite, so logarithmic loss
$d(c,q):=\log(1/q(c))$ is defined for $q\in\mathcal P(\mathcal C)$.
The source $\T$ and the statistic $Z=\Ph(\T)$ remain standard Borel,
and $(\T_i,\Cb_i)_{i=1}^n$ denotes an i.i.d. block.

For positive integers $n$ and $L$, let
$D_{\T\to\Cb}^{(n)}(L)$ be the infimum of
\begin{equation*}
 \frac1n\sum_{i=1}^n \E d(\Cb_i,Q_i)
\end{equation*}
over block encoders $f:\mathcal T^n\to\{1,\ldots,L\}$ and decoders
that assign a soft reconstruction $Q_i(m)\in\mathcal P(\mathcal C)$
to each message $m$ and coordinate $i$. Define
$D_{Z\to\Cb}^{(n)}(L)$ analogously.
The operational rate--distortion function is
\begin{equation*}
R_{\T\to\Cb}(D):=
\inf\left\{R:\limsup_{n\to\infty}
D_{\T\to\Cb}^{(n)}\!\left(\lceil\exp(nR)\rceil\right)\leq D\right\},
\end{equation*}
and analogously for $Z$.

\begin{theorem}[Finite-block operational reduction]
\label{thm:operational}
Assume the hypotheses of \Cref{thm:main} and let $\Cb$ be finite. Then
\begin{equation}
D_{\T\to\Cb}^{(n)}(L)=D_{Z\to\Cb}^{(n)}(L)
\qquad\text{for every }n,L\geq1.
\label{eq:block-distortion-equal}
\end{equation}
Consequently the operational remote rate--distortion functions satisfy
\begin{equation}
R_{\T\to\Cb}(D)=R_{Z\to\Cb}(D)
\qquad\text{for every }D\geq0.
\label{eq:rd-equal}
\end{equation}
A reduced code attains the same rate and distortion on the full source
after the symbolwise preprocessing $Z_i=\Ph(\T_i)$.
\end{theorem}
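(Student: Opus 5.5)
We prove \eqref{eq:block-distortion-equal} by two inequalities at each fixed $(n,L)$. Equation \eqref{eq:rd-equal} then follows immediately, because $R_{\T\to\Cb}$ and $R_{Z\to\Cb}$ are defined from the same sequence of finite-block quantities.

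The direction $D_{\T\to\Cb}^{(n)}(L)\le D_{Z\to\Cb}^{(n)}(L)$ is the preprocessing statement at the end of the theorem. Given any reduced code $(g,Q)$ with $g:\mathcal Z^n\to\{1,\dots,L\}$, define $f(t^n):=g(\Ph(t_1),\dots,\Ph(t_n))$ and keep the same decoder. Since $\Ph$ is measurable, $f$ is a measurable block encoder. The joint law of $(\Cb^n,f(\T^n))$ equals the joint law of $(\Cb^n,g(Z^n))$, so the expected log-loss is identical. Taking the infimum over reduced codes gives the inequality.

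The converse $D_{\T\to\Cb}^{(n)}(L)\ge D_{Z\to\Cb}^{(n)}(L)$ uses the fibre-average construction \eqref{eq:fibre-encoder} at block level. Fix a full code $(f,Q)$. By the i.i.d.\ assumption and \eqref{eq:sufficient}, the block chain $\Cb^n\leftrightarrow Z^n\leftrightarrow\T^n$ is Markov. Define the randomised reduced encoder
\[
F(m\mid z^n):=\Pr\bigl(f(\T^n)=m\mid Z^n=z^n\bigr).
\]
This is a measurable kernel by the disintegration in \Cref{app:measurability}, applied to the standard Borel pair $(\T^n,Z^n)$. Exactly as in the proof of \Cref{thm:main}, the $(\Cb^n,Z^n,M)$ marginal under $f$ coincides with $p(\Cb^n,Z^n)F(M\mid Z^n)$. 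Hence, with the decoder unchanged, the randomised reduced code achieves the same average distortion as $(f,Q)$.

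The main obstacle is that the paper defines $D_{Z\to\Cb}^{(n)}(L)$ over deterministic encoders, so the randomisation must be removed. Conditioned on $Z^n$, the code is a mixture of deterministic encoders, and the expected distortion is linear in that mixture. The plan is therefore to show that, for each $z^n$, choosing the message
\[
m^\star(z^n)\in\arg\min_m \sum_i \E\bigl[d(\Cb_i,Q_i(m))\mid Z^n=z^n\bigr]
\]
does no worse than the average over $F(\cdot\mid z^n)$. The minimum is over finitely many $m$, and the objective is a measurable function of $z^n$ through the kernel $p(\Cb^n\mid Z^n)$. A measurable minimiser $m^\star(z^n)$ is obtained by taking the smallest index attaining the minimum.

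Two technical points remain. First, infinite losses can occur when some $Q_i(m)$ assigns zero probability to a possible symbol of $\Cb$. If $D=\infty$ the inequality is trivial, so it suffices to restrict to codes with finite distortion; there the conditional expectations are finite almost surely, or else take the value $+\infty$, which is harmless under the minimisation. Second, the index $i$ and the decoder $Q$ are fixed throughout, so the derandomised encoder $m^\star$ together with $Q$ is a deterministic reduced code achieving distortion at most that of $(f,Q)$. Taking the infimum over full codes yields the converse and hence \eqref{eq:block-distortion-equal}.
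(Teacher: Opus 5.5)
Your proposal is correct and follows the paper's proof. The easy direction composes a reduced code with $\Ph^n$. For the converse, both arguments pick, for each $z^n$, the least message minimising $\ell(z^n,m)=\E[\sum_i d(\Cb_i,Q_i(m))\mid Z^n=z^n]$, and use the tensorised chain $\Cb^n\leftrightarrow Z^n\leftrightarrow\T^n$ to show this choice does no worse than the original code. The only difference is cosmetic: you pass through the randomised fibre-average encoder $F(m\mid z^n)$ and then derandomise, while the paper writes the same average directly as $\int \ell(z^n,f(t^n))\,p(dt^n\mid z^n)\ge \ell(z^n,g(z^n))$.
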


\begin{proof}
Every code for $Z^n$ is a code for $\T^n$ after composition with
$\Ph^n$, so
\begin{equation}
D_{\T\to\Cb}^{(n)}(L)\leq D_{Z\to\Cb}^{(n)}(L).
\label{eq:block-easy}
\end{equation}

For the reverse inequality, fix a full-source encoder $f$ and decoder
$Q_1,\ldots,Q_n$. For $z^n\in\mathcal Z^n$ and message
$m\in\{1,\ldots,L\}$, write
\begin{equation*}
\ell(z^n,m):=
\E\!\left[\left.
   \sum_{i=1}^n d\bigl(\Cb_i,Q_i(m)\bigr)
   \right|Z^n=z^n\right].
\end{equation*}
Choose $g(z^n)$ to be the least message that minimises $\ell(z^n,m)$.
Because the message set is finite and the conditional expectations are
measurable in $z^n$, $g$ is a measurable reduced encoder.

The one-letter sufficiency chain tensorises to
$\Cb^n\leftrightarrow Z^n\leftrightarrow\T^n$. Conditional on
$Z^n=z^n$, the distortion of the original code is therefore
\begin{align*}
&\E\!\left[\left.
  \sum_{i=1}^n d\bigl(\Cb_i,Q_i(f(\T^n))\bigr)
  \right|Z^n=z^n\right] \\
&\quad=\int \ell\bigl(z^n,f(t^n)\bigr)\,
            p(dt^n\mid z^n)
\;\geq\; \ell\bigl(z^n,g(z^n)\bigr).
\end{align*}
Thus $g$ with the same decoder and the same $L$ messages has no greater
expected distortion than the original full-source code. The infima give
$D_{Z\to\Cb}^{(n)}(L)\leq D_{\T\to\Cb}^{(n)}(L)$, which together with
\eqref{eq:block-easy} proves \eqref{eq:block-distortion-equal}.

The operational rate--distortion functions allow
$L\leq \lceil\exp(nR)\rceil$ and use the usual blocklength limit.
Equality at every $(n,L)$ therefore gives \eqref{eq:rd-equal}. The
last statement is precisely the composition used in
\eqref{eq:block-easy}.
\end{proof}

When the reduced source is finite, the common operational function has
the familiar single-letter form.

\begin{corollary}[Single-letter form for a finite reduced source]
\label{cor:operational-single-letter}
If $Z$ and $\Cb$ are finite, then
\begin{equation}
R_{\T\to\Cb}(D)=R_{Z\to\Cb}(D)
=\min_{p(\Xb\mid Z)}
 \left\{\MI(\Xb;Z):H(\Cb\mid\Xb)\leq D\right\}.
\label{eq:operational-single-letter}
\end{equation}
Equivalently, the common rate--distortion function is the generalised
inverse of the common IB curve as follows.
\begin{equation}
R_{\T\to\Cb}(D)
=\inf\left\{R\geq0:
  \mathcal I_{\T,\Cb}(R)\geq H(\Cb)-D\right\}.
\label{eq:operational-inverse}
\end{equation}
\end{corollary}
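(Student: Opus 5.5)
The first equality in \eqref{eq:operational-single-letter} is already \eqref{eq:rd-equal} of \Cref{thm:operational}, so the plan reduces to two tasks: identify $R_{Z\to\Cb}(D)$ for the finite pair $(Z,\Cb)$, and then rewrite it through the IB curve.

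For the single-letter formula I would use the known log-loss remote source coding theorem for finite alphabets (the Courtade--Weissman result for indirect rate--distortion under logarithmic loss). For a finite source $Z$ observed by the encoder and a finite remote target $\Cb$, the remote rate--distortion function equals $\min\{\MI(\Xb;Z): \E d(\Cb,\hat Q(\Xb))\le D\}$. Under log loss the optimal soft reconstruction for a fixed $\Xb$ is $\hat Q=p(\Cb\mid\Xb)$, which gives expected distortion $H(\Cb\mid\Xb)$. Both achievability and converse are the standard indirect rate--distortion arguments. For the converse I would use $nR\ge \MI(M;Z^n)\ge\sum_i \MI(M,Z^{i-1};Z_i)$, set $\Xb_i=(M,Z^{i-1})$, use the Markov chain $\Cb_i\leftrightarrow Z_i\leftrightarrow \Xb_i$ (valid by i.i.d.-ness), and finish with convexity of the single-letter function in $D$.

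Two points need checking here. First, the minimum is attained: by the support lemma we may take $|\mathcal X|\le |\mathcal Z|+1$, which gives compactness and continuity, as in \Cref{rem:attainment}. Second, our definition uses $\limsup$ and $\lceil e^{nR}\rceil$ messages; it matches the standard one once we argue that the function is continuous (convex and finite) for $D>0$. Boundary values of $D$ (e.g.\ $D<H(\Cb\mid Z)$, where the function is $+\infty$ as an empty infimum) are treated separately.

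For \eqref{eq:operational-inverse}, write $H(\Cb\mid\Xb)=H(\Cb)-\MI(\Xb;\Cb)$, so the constraint becomes $\MI(\Xb;\Cb)\ge H(\Cb)-D$. By definition of $\mathcal I_{Z,\Cb}$, the minimal rate $R^\ast$ in the single-letter formula satisfies $\mathcal I_{Z,\Cb}(R^\ast)\ge H(\Cb)-D$, and any $R<R^\ast$ has $\mathcal I_{Z,\Cb}(R)<H(\Cb)-D$. The latter uses attainment of the supremum defining the IB curve on finite alphabets, again via compactness with bounded $|\mathcal X|$. Hence $R^\ast=\inf\{R:\mathcal I_{Z,\Cb}(R)\ge H(\Cb)-D\}$. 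Theorem~\ref{thm:main}(i) then replaces $\mathcal I_{Z,\Cb}$ by $\mathcal I_{\T,\Cb}$.

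The main obstacle is the single-letter characterisation itself, specifically matching our finite-block definition of $R_{Z\to\Cb}$ (infimum over $R$ with $\limsup$) to the standard coding theorem, including the boundary of $D$. I would first cite the remote log-loss theorem and verify continuity of the single-letter function in $D$ to bridge the definitions.
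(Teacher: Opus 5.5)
Your proposal is correct and follows the paper's proof. Both use Theorem~\ref{thm:operational} for the first equality, the Courtade--Weissman log-loss coding theorem plus Gibbs' inequality (posterior reconstruction giving $H(\Cb\mid\Xb)$) for the single-letter formula, and $H(\Cb\mid\Xb)=H(\Cb)-\MI(\Xb;\Cb)$ together with Theorem~\ref{thm:main}(i) for the generalised inverse; the paper reaches Courtade--Weissman through their CEO theorem with an added constant-observing encoder. Your extra care about attainment via the support lemma and about matching the $\limsup$ definition goes beyond what the paper supplies. One correction: convexity gives continuity only for $D>H(\Cb\mid Z)$, not $D>0$, and right-continuity at the endpoint $D=H(\Cb\mid Z)$ follows from the same compactness argument.
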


\begin{proof}
For finite $Z$, remote source coding reduces to direct rate--distortion
coding with reproduction alphabet $\mathcal P(\mathcal C)$ and modified
distortion
$\widetilde d(z,q):=\E[d(\Cb,q)\mid Z=z]$
\cite{dobrushin1962information,wolf1970transmission,witsenhausen1980indirect}. The reproduction
alphabet is a continuum and logarithmic loss is unbounded, so the
finite-alphabet rate--distortion theorem is not invoked directly. The
required coding theorem is the log-loss CEO characterisation of
Courtade and
Weissman~\cite[Theorems~10 and~11]{courtade2014multiterminal}, applied
to the single encoder that observes $Z$. Their $m$-encoder statement covers
this case after the addition of an encoder that observes a constant. This encoder
satisfies the CEO conditional-independence hypothesis trivially.
It yields a minimum of
$\MI(\Xb;Z)$ over test channels $p(\Xb\mid Z)$ and reconstructions
$q=q(\Xb)$. For a fixed $\Xb$, Gibbs' inequality makes the posterior
$q(\cdot\mid\Xb)=p(\Cb\in\cdot\mid\Xb)$ optimal, with expected loss
$H(\Cb\mid\Xb)$. This proves the last expression in
\eqref{eq:operational-single-letter}. \Cref{thm:operational} gives the
two preceding equalities. Finally,
$H(\Cb\mid\Xb)=H(\Cb)-\MI(\Xb;\Cb)$, and \Cref{thm:main}(i) turns the
constraint into \eqref{eq:operational-inverse}.
\end{proof}

The point of \Cref{thm:operational} is stronger than the
single-letter identity. A sufficient-statistic front end loses nothing
at any finite blocklength or message budget. The asymptotic
rate--distortion equality is not an artefact of convexification or a
particular coding theorem.

\section{Certifying a task-sufficient front end}
\label{sec:examples}

The hypothesis of Theorem~\ref{thm:main} is a property of a
summary--task pair, not of a summary in isolation. Its conditional-law
form is
\begin{equation}
p(\Cb\in A\mid\T=t)
=p(\Cb\in A\mid Z=\Ph(t))
\quad\text{for every measurable }A,\text{ almost surely}.
\label{eq:task-interface}
\end{equation}
Under~\eqref{eq:task-interface}, every point of the full-source IB curve
is attainable from the reduced source, every reduced optimum pulls back
to a full-source optimum, and any full-source encoder that retains
within-fibre information pays the exact excess rate
$\MI(\Xb;\T\mid Z)$ with no gain in relevance. Four canonical settings
certify the condition exactly.

\paragraph{Posterior statistic}
The conditional law itself is always sufficient. For any chosen task
variable $\Cb$, the statistic $Z=p(\Cb\in\cdot\mid\T)$ defines a
random probability measure on $(\mathcal{C},\mathcal{B}_\Cb)$ and satisfies
\eqref{eq:task-interface}. For a finite label set this is the posterior
probability vector. For general $\Cb$ it is measure-valued and
typically infinite-dimensional, so a simpler sufficient statistic is
preferable when one exists. A hard predicted label is sufficient only in the
special case that the full posterior is constant on its label fibres.

\paragraph{Structured conditionals}
If $p(\Cb\mid\T=t)$ depends on $t$ only through a finite-dimensional
parameter $\eta(t)$, as for an exponential-family conditional with
natural parameter $\eta(t)$, then $Z=\eta(\T)$ satisfies
\eqref{eq:task-interface}. Regression with independent additive noise
is the special case in which $\eta$ is the regression function. For
jointly Gaussian pairs the conditional mean is such a statistic, and
\Cref{sec:gaussian} develops the consequences.

\paragraph{Deterministic tasks}
If $\Cb=g(\T)$ almost surely, condition \eqref{eq:task-interface}
reduces to the existence of a measurable $h$ such that $g=h\circ\Ph$
almost surely. A summary is exactly sufficient precisely when the task
function is constant on each fibre of $\Ph$. Variation of the task
value within a fibre is precisely the case in which an exact reduction
cannot be asserted. The degenerate IB behaviour of this regime is the
subject of \Cref{cor:kolchinsky}.

\paragraph{Sensor sources with irrelevant nuisance}
Let $\T=(S,W)$ with $\Cb$ conditionally independent of $W$ given $S$.
The coordinate projection $\Ph(s,w)=s$ then satisfies
\eqref{eq:task-interface}. The nuisance component can add rate to a
representation but cannot add relevance. This is the projection form
in which Harremo\"{e}s and
Tishby recorded their original observation, and the exact binary
example of \Cref{sec:numerics} instantiates it. Each copied nuisance bit
costs one bit of rate and adds nothing to the relevance term.

Condition~\eqref{eq:task-interface} is deliberately task-relative. A
summary that satisfies it for one task variable may fail it for a
different target defined on the same source, so sufficiency must be
checked separately for each task. If only approximate sufficiency is
known, the exact equalities of this paper are not claimed.

\section{An exact finite example}
\label{sec:numerics}

The reduction can be seen without estimation or numerical optimisation.
Let $Z\sim\operatorname{Bernoulli}(1/2)$, let
$W\sim\operatorname{Uniform}\{0,1\}^d$ be independent of $Z$, and set
$\T:=(Z,W)$. Let
\begin{equation*}
\Cb=Z\oplus N,
\qquad N\sim\operatorname{Bernoulli}(\varepsilon),
\end{equation*}
where $N$ is independent of $(Z,W)$. Then
$\Cb\leftrightarrow Z\leftrightarrow\T$, so the projection
$\Ph(z,w)=z$ is sufficient. The full source has $2^{d+1}$ states,
whereas the reduced source has two.

For a reduced binary encoder
\begin{equation*}
\Xb=Z\oplus E,
\qquad E\sim\operatorname{Bernoulli}(\delta),
\end{equation*}
with $E$ independent of $(Z,W,N)$, direct calculation gives
\begin{align}
\MI(\Xb;\T)=\MI(\Xb;Z)
  &=\log 2-h_2(\delta), \\
\MI(\Xb;\Cb)
  &=\log 2-h_2(\varepsilon\star\delta),
\end{align}
where $h_2(u):=-u\log u-(1-u)\log(1-u)$ and
$\varepsilon\star\delta:=\varepsilon+\delta-2\varepsilon\delta$.
Neither quantity depends on $d$.

Now consider the deliberately wasteful full-source representation
$Y:=(\Xb,W)$. Since $W$ is independent of $(Z,\Cb,\Xb)$,
\begin{align*}
\MI(Y;\Cb)&=\MI(\Xb;\Cb),\\
\MI(Y;\T)&=\MI(\Xb;Z)+d\log 2.
\end{align*}
The fibre average of $Y$ retains the same binary encoder $\Xb$ and
replaces the copied nuisance coordinate by an independent uniform
one. Its relevance is unchanged and its rate is smaller by exactly
\begin{equation*}
\MI(Y;\T\mid Z)=H(W)=d\log 2.
\end{equation*}
This is \eqref{eq:fibre-lagrangian} in arithmetic form. Information
spent on variation within a fibre of the sufficient statistic can
increase compression cost but cannot increase relevance.

\section{Discussion}
\label{sec:discussion}

\paragraph{Scope of the reduction}
The reduction applies whenever $p(\Cb\mid\T)$ factors through a
statistic $\Ph$.
\Cref{sec:examples} collects the canonical certifications. The
theorem is useful when such a statistic is known and materially
simpler than the source. The examples also show why the target must
be named. ``Sufficient summary'' is incomplete unless the task
variable is also specified.

\paragraph{Relation to prior work}
The fact that sufficient statistics preserve mutual information is
classical~\cite{cover2006elements}. The closest prior statement
specifically about IB is that of Harremo\"{e}s and
Tishby~\cite[Section~III, p.~568]{harremoes2007ib}. They consider the
projection-form case $X = (X_1, X_2)$ with
$X_1 \perp Y \mid X_2$ (their $X$, $Y$ are our $\T$, $\Cb$). They
write that ``we should compare the bottleneck problem $X \to Y$ with the
bottleneck problem $X_2 \to Y$ and show that they have the same
rate distortion function'' and support the claim with a sketched
joint-extension argument. Their setting is explicitly
finite-alphabet (``for simplicity we shall assume that $\mathbb{A}$
and $\mathbb{B}$ are finite'', \cite[p.~566]{harremoes2007ib}) with
the parenthetical note that ``this sufficiency result on the input
side holds for any distortion measure''
\cite[p.~568]{harremoes2007ib}. The statement is brief and not
developed further. They give no minimiser correspondence or
Lagrangian-level equivalence, and no corollaries are
drawn. The observation does not appear to have been picked up in
the subsequent IB literature, and to our knowledge has not been
elevated to a theorem in the intervening two decades.
\Cref{thm:main} completes the program implicit in that observation.
We prove three extensions on standard Borel spaces under essentially no
regularity beyond $\MI(\T;\Cb) < \infty$. First, curve equality holds
for an arbitrary measurable sufficient statistic
$\Ph:\mathcal{T}\to\mathcal{Z}$, not only for coordinate projections.
Second, the Lagrangian is preserved at every $\beta$. Third, there is an
explicit pullback
correspondence~\eqref{eq:pullback} between minimisers of the two
problems.

There is also a useful decision-theoretic reading. As channels
from $\Cb$, $Z=\Ph(\T)$ is a deterministic garbling of $\T$, while
the sufficiency chain $\Cb\leftrightarrow Z\leftrightarrow\T$
supplies the reverse garbling $p(\T\mid Z)$. Thus $\T$ and $Z$ are
Blackwell-equivalent experiments about $\Cb$
\cite{blackwell1953equivalent}. Blackwell equivalence guarantees equal
value in every decision problem, but it does not by itself equate the
ordinary IB rate coordinates $\MI(\Xb;\T)$ and $\MI(\Xb;Z)$.
Equivalence constrains the posteriors an observer can reach, not the
rate an encoder must spend to reach them.
The fibre-average identity supplies that rate-sensitive conclusion
and identifies the exact difference $\MI(\Xb;\T\mid Z)$.

Kolchinsky's redundancy bottleneck~\cite{kolchinsky2024redundancy}
uses Blackwell order in a distinct multi-source problem from partial
information decomposition. It relaxes the zero-leakage formulation of
Blackwell redundancy. The relaxed form permits conditional information
about which source supplied the representation. Its compression coordinate is source-identity
leakage rather than the ordinary IB source rate. It does not consider
replacement of one source by a sufficient statistic or establish
curve, Lagrangian, minimiser, or operational equivalence for the
standard IB. The two results are complementary. The redundancy
bottleneck turns Blackwell comparability across sources into a new
tradeoff, whereas \Cref{thm:main} turns exact Blackwell equivalence of
a source and its task-sufficient statistic into an invariance of the
ordinary IB.

The Gaussian IB of~\cite{chechik2005gaussian} is then an
immediate corollary (\Cref{cor:chechik}). The deterministic-$\Cb$
pathologies documented by Kolchinsky et
al.~\cite{kolchinsky2019caveats} are recovered as the degenerate
case treated in \Cref{cor:kolchinsky}. The algorithmic
consequence is that a finite reduced problem depends on the alphabet
of $\Ph(\T)$ rather than the alphabet of $\T$
(\Cref{cor:computation}). Related ideas appear implicitly in the
exponential-family IB of~\cite{painsky2018gaussianlower}, in the
agglomerative Blahut--Arimoto reduction
of~\cite{slonim2000agglomerative}, and in the minimal-sufficient representation
literature~\cite{achille2018emergence}. The last of these imposes
sufficiency on the representation side and addresses a different
question. The agglomerative method clusters the alphabet of a discrete
source before the Blahut--Arimoto iteration. It is an approximate,
data-driven counterpart of the reduction and is exact precisely when
merged symbols share the same conditional $p(\Cb\mid t)$, which means
the merge map is sufficient. Its bottom-up merge criterion is a
marginal-weighted Jensen--Shannon divergence between the merged
conditionals, and it vanishes exactly under that condition.
Several adjacent lines of IB work address
distinct questions. Shamir, Sabato, and
Tishby~\cite{shamir2010learning} derive finite-sample generalisation
bounds on the representation-side estimator.
\Cref{thm:main} is a population-level source-side identity,
orthogonal to their finite-sample analysis. Hsu, Asoodeh,
Salamatian, and Calmon~\cite{hsu2018generalizing} generalise the
bottleneck to $f$-information functionals under a matched-channel
hypothesis. The reduction here is a coordinate identity within
standard mutual information, without $f$-generalisation. Asoodeh
and Calmon~\cite{asoodeh2020bottleneck} give an
estimation-theoretic view that connects IB and privacy funnel to
hypothesis testing and noisy source coding. This is a functional
recasting rather than a source-side reduction. A canonical survey of IB variants
that covers distributed IB and its CEO interpretation is Zaidi,
Estella-Aguerri, and Shamai~\cite{zaidi2020ibsurvey}. The proofs above
are deliberately elementary. They use one pullback, one conditional
average, and the mutual-information chain rule. We
regard this as a feature of the result rather than a limitation.
The reduction was available at this cost for two decades. The
contribution is the identification of the objects that make it an
exact theorem, together with its consequences.

\paragraph{Limitations}
Theorem~\ref{thm:main} is a population result. In applications,
$\Ph$ must either be known a priori or estimated, and the reduced
problem then inherits the usual finite-sample error in
$\widehat{p}(\Ph,\Cb)$. When $\Ph$ is unknown, the theorem is best
read as a conditional statement, not as an algorithm that discovers
a statistic. One must first establish sufficiency and then solve IB on the reduced
pair. Approximate or estimated sufficiency requires a separate error
analysis. In particular, the theorem does not turn an approximately
task-sufficient summary into an exact reduction.

\paragraph{Future directions}
Natural extensions are a finite-sample analysis of the reduced
problem with estimated $\widehat{p}(\Ph,\Cb)$, a stability analysis
under approximate sufficiency in which $\MI(\Cb;\T\mid Z)$ is small
but nonzero, application to multi-view and conditional IB variants,
and the use of the reduction as a preconditioning step inside neural
IB methods.

\appendix

\section{Notation summary}
\label{app:notation}

\begin{table}[!ht]
\centering
\begin{tabular}{ll}
\toprule
Symbol & Meaning \\
\midrule
$\T$, $\Cb$, $\Xb$ & source, relevance variable, representation \\
$\Ph$ & sufficient statistic map $\Ph:\mathcal{T}\to\mathcal{Z}$ \\
$\mathcal{I}_{\T,\Cb}(R)$ & IB information curve \\
$\mathcal{L}_\beta^{\T}$ & IB Lagrangian infimum at trade-off $\beta$ \\
$\MI(A;B)$ & mutual information between $A$ and $B$ \\
$\MI(A;B\mid C)$ & conditional mutual information \\
$p(\Xb\mid\T)$ & encoder (stochastic kernel) \\
$\R^d$ & $d$-dimensional Euclidean space \\
$\Sigma_{AB}$ & cross-covariance of $A$ and $B$ \\
$r$ & rank of $\Sigma_{\Cb\T}\Sigma_{\T\T}^{-1}$ \\
\bottomrule
\end{tabular}
\caption{Notation used throughout the paper.}
\label{tab:notation}
\end{table}

\section{Measurability details}
\label{app:measurability}

Let $(\mathcal{T},\mathscr{F}_\T)$, $(\mathcal{Z},\mathscr{F}_\Z)$,
and $(\mathcal{X},\mathscr{F}_\Xb)$ be standard Borel spaces, and let
$\Ph : \mathcal{T} \to \mathcal{Z}$ be measurable. The conditional
distributions $p(\Cb\mid\T=t)$ and $p(\Xb\mid\T=t)$ are taken as
regular conditional probabilities, which exist on standard Borel
spaces~\cite{kallenberg2002foundations}. See
also~\cite{dellacherie1978probabilities,bogachev2007measure} for
disintegration and regular conditional distributions in fuller
generality. The integrals
\[
q(\Xb\in A\mid\Ph=z)
\;=\;
\int p(\Xb\in A\mid\T=t)\,p(\T\in dt\mid\Ph=z)
\]
are well-defined as disintegrations of $p(\T,\Xb)$ along $\Ph$,
and the resulting kernel $q(\Xb\mid\Ph)$ is measurable in $z$ by
standard disintegration theory.

\bibliographystyle{plain}
\bibliography{references}

\end{document}